\newtheorem{lemma}{Lemma}
\newtheorem{proposition}{Proposition}
\newtheorem{remark}{Remark}
\newtheorem{assumption}{Assumption}
\newtheorem{corollary}{Corollary}
\begin{document}

\title[The quasi-Gaussian log-normal HJM model]
{Small-noise limit of the quasi-Gaussian log-normal HJM model}

\author{Dan Pirjol}
\email{
dpirjol@gmail.com}

\author{Lingjiong Zhu}
\email{
ling@cims.nyu.edu}

\date{}

\keywords{HJM model, explosion, stochastic modeling, ordinary differential equations.}

\begin{abstract}
Quasi-Gaussian HJM models are a popular approach for modeling 
the dynamics of the yield curve. This is due to their low dimensional 
Markovian representation, which greatly simplifies their numerical 
implementation. We present a qualitative study of the solutions of the
quasi-Gaussian log-normal HJM model. Using a small-noise deterministic
limit we show that the short rate may explode to infinity in finite 
time. This implies the explosion of the Eurodollar futures prices in this model.
We derive explicit explosion criteria under mild assumptions on the
shape of the yield curve. 
\end{abstract}

\maketitle

\section{Introduction}

HJM models \cite{HJM} are widely used in financial practice for modeling 
fixed income, credit and commodity markets \cite{APbook}. These models specify 
the dynamics of the yield curve $f(t,T)$ as 
\begin{equation}
df(t,T) = \sigma_f(t,T)^T dW(t) + 
\sigma_f^T(t,T) \left( \int_t^T \sigma_f(t,s) ds \right) dt \,,
\end{equation}
where $W(t)$ is a vector Brownian motion under the risk-neutral measure 
$\mathbb{Q}$ and $\{\sigma_f(t,T)\}_{t\leq T}$ is a family of vector processes.
The numerical simulation of these models is complicated by the fact that the 
entire yield curve $f(t,T)$ has to be simulated. Lattice and tree simulation 
methods require an exponentially large number of nodes. 
For this reason the simulation of these models is restricted in practice to 
Monte Carlo methods. 

The quasi-Gaussian HJM models \cite{APbook,review,B,C,RS} were introduced to 
simplify the simulation of the HJM models.
They are obtained by assuming a separable form for the volatility 
$\sigma_f(t,T)^T = g(T)^T h(t)$ where $g$ is a deterministic vector function
and $h$ is a $k\times k$ matrix process. Such models admit a Markov 
representation of the dynamics of the yield curve involving $k+\frac12 k(k+1)$ 
state variables. This simplifies very much their simulation, which can be 
done either using Monte Carlo or finite difference methods \cite{CaZhu,PhD}. 

We consider in this note the one-factor quasi-Gaussian HJM model with
volatility specification $\sigma_f(t,T) = k(t,T) \sigma(r_t)$ where 
$k(t,T)=e^{-\beta(T-t)}$, and $\sigma(r_t)$ is the volatility of the short rate
$r_t = f(t,t)$. This model admits a two state Markov representation.

It has been noted in \cite{Morton,HJM} that in HJM models with log-normal
volatility specification, that is for which $\sigma_f(t,T) = \sigma(t,T)
f(t,T)$, the rates explode to infinity with probability one, and 
zero coupon bond prices are zero. 
It is natural to ask if a similar explosive phenomenon is present also in the
quasi-Gaussian HJM model with log-normal volatility $\sigma(r_t) = \sigma r_t$. 
This model is used in financial practice for modeling swaption volatility smiles
\cite{Chibane} and is a particular case of a more general parametric representation
\cite{quadratic}.

We study in this note the qualitative behavior of the solutions of this model.
In the small-noise deterministic limit, we show rigorously that the short rate may explode to infinity in a finite time. 
More precisely, for sufficiently small mean-reversion $\beta$, 
the deterministic approximation for the short rate has an explosion in 
finite time, and an upper bound is given on the explosion time, which
is saturated in the flat forward rate limit.
When Brownian noise is taken into account, the explosion time has a
distribution around the deterministic limit. 

This phenomenon has implications for the practical use of the model
for pricing and simulation. It implies an explosion of the Eurodollar futures 
prices in this model, and introduces a limitation in the applicability of the 
model for pricing these products to maturities smaller than the explosion time. 

\section{Log-normal quasi-Gaussian HJM model}

The one-factor log-normal quasi-Gaussian HJM model is defined by the 
volatility specification
\begin{equation}
\sigma_f(t,T) = \sigma r_t e^{-\beta (T-t)} \,.
\end{equation}
The simulation of the model requires the solution of the stochastic
differential equation for the two variables $\{ x_t,y_t\}_{t\geq 0}$ 
\cite{RS,APbook}
\begin{eqnarray}\label{xySDE}
&& dx_t = (y_t - \beta x_t) dt + \sigma (\lambda(t) + x_t) dW_t  ,\\
&& dy_t = (\sigma^2 (\lambda(t) + x_t)^2 - 2\beta y_t) dt ,\nonumber
\end{eqnarray}
with initial condition $x_0 = y_0 = 0$. Here $\lambda(t) = f(0,t)$ is
the forward short rate, giving the initial yield curve. 
The zero coupon bonds are
\begin{equation}\label{ZCB}
P(t,T) = \frac{P(0,T)}{P(0,t)}
\exp\left(- G(t,T) x_t - \frac12 G^2(t,T) y_t \right)\,,
\end{equation}
with $G(t,T) \geq 0$ a non-negative deterministic function \cite{APbook}. 
The short rate is $r_t := f(t,t) = \lambda(t) + x_t$.
The equations (\ref{xySDE}) can be expressed in terms of the short rate as
\begin{eqnarray}\label{rySDE}
&& dr_t = (y_t - \beta r_t + \beta \lambda(t) + \lambda'(t)) dt 
        + \sigma r_t dW_t ,\\
&& dy_t = (\sigma^2 r_t^2 - 2\beta y_t)  dt \,,\nonumber
\end{eqnarray}
with the initial condition $r_0 = \lambda_0 := \lambda(0) > 0$ and $y_0 = 0$.

The solutions of the process \eqref{rySDE} may explode with 
non-zero probability. This will be discussed in \cite{PZ}. 
When the volatility $\sigma=0$, there is no explosion. 
Indeed, when $\sigma=0$, we have
\begin{align*}
&dr_{t}=(y_{t}-\beta r_{t}+\beta\lambda(t)+\lambda'(t))dt,
\\
&dy_{t}=-2\beta y_{t}dt,
\end{align*}
with the initial condition $r_{0}=\lambda_{0}$ and $y_{0}=0$.
Thus $y_{t}\equiv 0$, which gives 
$r'_{t}=-\beta r_{t}+\beta\lambda(t)+\lambda'(t)$.
This ODE can be easily solved with the result $r_t = \lambda(t)$.

\section{Deterministic approximation}
\label{sec:3}

Instead of studying directly the distribution of the explosion time of the
process $(r_t,y_t)$, we study a deterministic proxy of the equations 
(\ref{rySDE}). 
In the limit when the Brownian noise in these equations goes to zero, then
$(r_{t},y_{t})\rightarrow(r(t),y(t))$, where $(r(t),y(t))$ satisfy the 
two-dimensional ODE:
\begin{eqnarray}\label{ODEr}
&& r'(t)=y(t)-\beta r(t)+\beta\lambda(t)+\lambda'(t), \\
&& y'(t)=\sigma^{2} r^2(t)-2\beta y(t), \nonumber
\end{eqnarray}
with $r(0)=\lambda_{0}$ and $y(0)=0$.
The variable $r(t)$ can be interpreted as the deterministic approximation of the short rate $r_{t}$
and its expected value $\mathbb{E}^{\mathbb{Q}}[r_{t}]$ for the small-noise limit.
The pair $(r(t),y(t))$ is a deterministic approximation of the two-dimensional SDE \eqref{rySDE}.

We study here the qualitative properties of the solution for $r(t)$.
Even though (\ref{ODEr}) is a system of 2-dim ODEs, 
we will show that $r(t)$ can be expressed as a solution to a 1-dim
integral equation.

\begin{proposition}\label{Prop:1}
$r(t)$ satisfies 
the integral equation
\begin{equation}\label{integrodiffeq}
r(t)=\lambda(t)+\frac{\sigma^{2}}{\beta}
\int_{0}^{t}r^2(s) [e^{\beta(s-t)}-e^{2\beta(s-t)}]ds.
\end{equation}
\end{proposition}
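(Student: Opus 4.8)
The plan is to eliminate $y(t)$ from the system \eqref{ODEr} by solving each equation as a first-order linear ODE with an integrating factor, and then to collapse the resulting nested integrals into a single one. First I would treat the equation for $y$, namely $y'(t)+2\beta y(t)=\sigma^{2}r^{2}(t)$. Multiplying by the integrating factor $e^{2\beta t}$ and using the initial condition $y(0)=0$ gives the explicit representation
\begin{equation*}
y(t)=\sigma^{2}\int_{0}^{t}e^{-2\beta(t-s)}r^{2}(s)\,ds.
\end{equation*}

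Next I would introduce the deviation $u(t):=r(t)-\lambda(t)$ (which is the deterministic limit of $x_{t}$). Subtracting $\lambda'(t)$ from the first equation in \eqref{ODEr} and regrouping the $\beta\lambda(t)$ term shows that $u'(t)=y(t)-\beta u(t)$, i.e.\ $u'(t)+\beta u(t)=y(t)$, with $u(0)=r(0)-\lambda(0)=0$. Solving this linear ODE with the integrating factor $e^{\beta t}$ yields
\begin{equation*}
u(t)=\int_{0}^{t}e^{-\beta(t-s)}y(s)\,ds.
\end{equation*}

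The remaining step is to substitute the expression for $y(s)$ into this formula, producing a double integral over the region $0\le\tau\le s\le t$, and then to interchange the order of integration (justified by nonnegativity of the integrand, via Tonelli, on any interval before a possible explosion). After the swap the inner integral $\int_{\tau}^{t}e^{-\beta(t-s)}e^{-2\beta(s-\tau)}\,ds$ can be evaluated in closed form; a direct computation gives $\frac{1}{\beta}\bigl[e^{\beta(\tau-t)}-e^{2\beta(\tau-t)}\bigr]$. This is the step I expect to be the main bookkeeping obstacle, since one must track the exponents carefully through the change of order and the elementary integration. Relabeling $\tau$ as $s$ and adding back $\lambda(t)$ then produces exactly
\begin{equation*}
r(t)=\lambda(t)+\frac{\sigma^{2}}{\beta}\int_{0}^{t}r^{2}(s)\bigl[e^{\beta(s-t)}-e^{2\beta(s-t)}\bigr]\,ds,
\end{equation*}
which is \eqref{integrodiffeq}. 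I would close by remarking that the two integrating-factor solutions are equivalent to the original ODE system (both constructions are reversible given the initial data), so \eqref{integrodiffeq} characterizes the same $r(t)$.
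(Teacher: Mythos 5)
Your proposal is correct and follows essentially the same route as the paper: solve the $y$-equation with the integrating factor $e^{2\beta t}$, apply the integrating factor $e^{\beta t}$ to the $r$-equation, and exchange the order of integration in the resulting double integral. Your substitution $u(t)=r(t)-\lambda(t)$ is only a cosmetic repackaging of the paper's handling of the $\beta\lambda(t)+\lambda'(t)$ terms (which it absorbs as the derivative of $\lambda(t)e^{\beta t}$), and your closed-form inner integral $\frac{1}{\beta}\bigl[e^{\beta(\tau-t)}-e^{2\beta(\tau-t)}\bigr]$ matches the paper's computation exactly.
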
 

\begin{proof}
We can solve for $y(t)$ as
\begin{equation}\label{ysol}
y(t)=\sigma^{2}\int_{0}^{t}r^2(s) e^{2\beta(s-t)}ds.
\end{equation}
Substituting into (\ref{ODEr}) we get
\begin{equation}
r'(t)+\beta r(t)=\sigma^{2}\int_{0}^{t}r^2(s) e^{2\beta(s-t)}ds
+\beta\lambda(t)+\lambda'(t).
\end{equation}
Multiplying by the integrating factor $e^{\beta t}$ and integrating from 
$0$ to $t$, we obtain:
\begin{align}
r(t)e^{\beta t}-\lambda_{0}
&=\sigma^{2}\int_{0}^{t}\int_{0}^{u}r^2(s) e^{2\beta s}e^{-\beta u}du ds
+\lambda(t)e^{\beta t}-\lambda_{0}
\\
&=\lambda(t)+\sigma^{2}\int_{0}^{t}\int_{s}^{t}
  r^2(s) e^{2\beta s}e^{-\beta u}du ds
-\lambda_{0} \,, \nonumber
\end{align}
which yields Eq.~\eqref{integrodiffeq}.
\end{proof}

We show next that if $\lambda(t)$ is uniformly bounded, 
for sufficiently large $\beta$ or sufficiently small $\sigma$, $r(t)$ 
is also uniformly bounded, and hence there will be no explosion.

\begin{proposition}\label{Prop:2}
Assume that $\lambda(t)$ is uniformly bounded. 
Then, for sufficiently large $\beta$
or sufficiently small $\sigma$, we have
\begin{equation}
\max_{t\geq 0}r(t)
\leq\frac{\beta^{2}}{\sigma^{2}}\left(1-\sqrt{1-\max_{t\geq 0}\lambda(t)\frac{2\sigma^{2}}{\beta^{2}}}\right).
\end{equation}
It follows that there will be no explosion.
\end{proposition}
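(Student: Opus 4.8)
The plan is to feed the integral equation \eqref{integrodiffeq} of Proposition~\ref{Prop:1} back into itself and convert it into a closed self-consistent inequality for the running maximum of $r$. The elementary input is that for $s\le t$ the kernel $e^{\beta(s-t)}-e^{2\beta(s-t)}$ is nonnegative and that its total mass is uniformly bounded. Substituting $u=t-s$,
\[
\int_0^t \left[ e^{\beta(s-t)} - e^{2\beta(s-t)} \right] ds
= \int_0^t \left( e^{-\beta u} - e^{-2\beta u}\right) du
= \frac{1}{\beta}\left(1 - e^{-\beta t}\right) - \frac{1}{2\beta}\left(1 - e^{-2\beta t}\right),
\]
whose derivative $e^{-\beta u}(1-e^{-\beta u})\ge 0$ shows it is increasing in $t$ and bounded above by its limit $\tfrac{1}{2\beta}$.

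Next I work on any interval $[0,t]$ on which the (locally Lipschitz, hence $C^1$) solution of \eqref{ODEr} exists, and set $M_t:=\max_{0\le s\le t} r(s)$ and $L:=\max_{t\ge 0}\lambda(t)$, finite by hypothesis. Bounding $r^2(s)\le M_t^2$ inside the integral and using the kernel bound gives $r(\tau)\le L+\frac{\sigma^2}{2\beta^2}M_t^2$ for every $\tau\le t$; taking the maximum over $\tau\in[0,t]$ produces the quadratic inequality
\[
M_t \le L + \frac{\sigma^2}{2\beta^2} M_t^2,
\qquad\text{that is}\qquad
\frac{\sigma^2}{2\beta^2}M_t^2 - M_t + L \ge 0 .
\]
Under the hypothesis $\frac{2\sigma^2 L}{\beta^2}<1$ — which is exactly what ``$\beta$ large enough'' ($\beta>\sigma\sqrt{2L}$) or ``$\sigma$ small enough'' encodes — this upward parabola has two distinct real roots $M_\pm=\frac{\beta^2}{\sigma^2}\left(1\pm\sqrt{1-2\sigma^2 L/\beta^2}\right)$, and the inequality forces $M_t\le M_-$ or $M_t\ge M_+$ for each $t$.

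The main obstacle is selecting the correct (smaller) root and excluding the branch $M_t\ge M_+$; I would resolve this by a continuity/connectedness argument. A short computation (equivalent to $(1-a/2)^2\ge 1-a$ with $a=2\sigma^2L/\beta^2$) shows $M_-\ge L\ge \lambda_0$, so at $t=0$ we have $M_0=r(0)=\lambda_0\le M_-<M_+$. Since $r$ is continuous, $t\mapsto M_t$ is continuous and nondecreasing, and by the previous step it never takes values in the nonempty open gap $(M_-,M_+)$. A continuous function starting at or below $M_-$ cannot reach $M_+$ without passing through this forbidden gap, so $M_t\le M_-$ throughout the interval of existence. As this bound is uniform in $t$, the solution cannot blow up, hence exists for all $t\ge 0$ and satisfies
\[
\max_{t\ge 0} r(t)\le M_- = \frac{\beta^2}{\sigma^2}\left(1-\sqrt{1-\max_{t\ge 0}\lambda(t)\frac{2\sigma^2}{\beta^2}}\right),
\]
which is the assertion, and in particular rules out explosion.
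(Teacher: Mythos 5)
Your proof is correct, and up to the branch-selection step it follows the paper's route exactly: both arguments bound the kernel integral in the equation \eqref{integrodiffeq} of Proposition~\ref{Prop:1} by $\tfrac{1}{2\beta}$, obtain the same quadratic inequality $M \le L + \tfrac{\sigma^2}{2\beta^2}M^2$ for the running maximum $M$ of $r$, and note that this forces $M$ to lie outside the open interval between the two roots. Where you genuinely differ is in how the upper branch is excluded. The paper argues asymptotically: it observes that the larger root $R_2(T)\to\infty$ as $\beta\to\infty$ and asserts that $R(T)$ stays bounded, concluding that for sufficiently large $\beta$ only the lower branch is possible --- an argument that, as written, is circular (it invokes the boundedness being proved) or at least leaves unjustified why $\max_{0\le t\le T} r(t)$ admits a $\beta$-uniform bound before the dichotomy is resolved. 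Your continuity/connectedness argument repairs exactly this point: $M_0=\lambda_0\le M_-$ (via the elementary check $M_-\ge L$), the map $t\mapsto M_t$ is continuous and nondecreasing on the interval of existence, and it cannot cross the forbidden open gap $(M_-,M_+)$, hence $M_t\le M_-$ throughout. This buys two things over the paper's version: a logically complete branch selection, and the explicit quantitative threshold $\beta>\sigma\sqrt{2L}$ (equivalently $2\sigma^2 L<\beta^2$) in place of an unquantified ``sufficiently large $\beta$.'' One small omission: to conclude global existence of the two-dimensional system you should also remark that $y$ stays bounded, which is immediate from the representation $y(t)=\sigma^2\int_0^t r^2(s)\,e^{2\beta(s-t)}\,ds \le \sigma^2 M_-^2/(2\beta)$; with both components bounded, no explosion can occur.
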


\begin{proof}
We only give a proof for the large $\beta$ result. 
The same result holds for sufficiently small $\sigma$ with a very similar proof.

For any $t\in[0,T]$, we have from Eq.~(\ref{integrodiffeq})
\begin{equation}
r(t)\leq\max_{0\leq t\leq T}\lambda(t)
+\left[\max_{0\leq t\leq T}r(t)\right]^{2}\frac{\sigma^{2}}{\beta}\int_{0}^{\infty}[e^{-\beta s}-e^{-2\beta s}]ds,
\end{equation}
which implies that $R(T):=\max_{0\leq t\leq T}r(t)$ satisfies
\begin{equation}
R(T)-R^2(T)\frac{\sigma^{2}}{2\beta^{2}}\leq\max_{0\leq t\leq T}\lambda(t).
\end{equation}
This implies that we have either 
(i) $R(T) \leq R_{1}(T)$, or (ii) $R(T) \geq R_{2}(T)$, with
\begin{equation}
R_{1,2}(T) := \frac{\beta^{2}}{\sigma^{2}}
\left(1\mp \sqrt{1-\max_{0\leq t\leq T}\lambda(t)\frac{2\sigma^{2}}{\beta^{2}}}
\right)\,.
\end{equation}
For large $\beta$, $R(T)$ is bounded by Proposition~\ref{Prop:2},
while $R_2(T)\to \infty$ as $\beta\rightarrow\infty$. 
Therefore, for sufficiently large $\beta>0$, we have $R(T) \leq R_1(T)$.
Taking now $T\to \infty$, we have by the uniformly bounded assumption 
$\max_{t\geq 0}\lambda(t)<\infty$. It follows that for sufficiently large 
$\beta$,
\begin{equation}
\max_{t\geq 0}r(t)
\leq\frac{\beta^{2}}{\sigma^{2}}
\left(1-\sqrt{1-\max_{t\geq 0}\lambda(t)\frac{2\sigma^{2}}{\beta^{2}}}\right).
\end{equation}
We conclude that for sufficiently large $\beta$, $r(t)$ is not explosive and 
is indeed uniformly bounded as long as $\lambda(t)$ is uniformly bounded.
\end{proof}

\begin{remark}\label{uniformRemark}
From Proposition \ref{Prop:2}, it follows
that $\max_{t\geq 0}r(t)$ is uniformly bounded as either
$\beta\rightarrow\infty$ or $\sigma\rightarrow 0$, since
\begin{equation*}
\limsup_{\beta\rightarrow\infty}\max_{t\geq 0}r(t)
\leq\limsup_{\beta\rightarrow\infty}\frac{\beta^{2}}{\sigma^{2}}\left(1-\sqrt{1-\max_{t\geq 0}\lambda(t)\frac{2\sigma^{2}}{\beta^{2}}}\right)
=\max_{t\geq 0}\lambda(t),
\end{equation*}
and the same result holds for $\sigma\rightarrow 0$.
\end{remark}

From Proposition \ref{Prop:1}, Proposition \ref{Prop:2} and Remark \ref{uniformRemark}, we immediately 
get the following corollary.

\begin{corollary}\label{corollary1}
As $\beta\rightarrow\infty$ (resp. $\sigma\rightarrow 0$), we have $r(t)\rightarrow\lambda(t)$ uniformly for $t\geq 0$.
More precisely, 
\begin{equation*}
\max_{t\geq 0}|r(t)-\lambda(t)|
\leq\left[\max_{t\geq 0}\lambda(t)\right]^{2}\frac{\sigma^{2}}{2\beta^{2}}+o(\beta^{-2})
\quad(\text{resp. } o(\sigma^{2})),
\qquad\text{as $\beta\rightarrow\infty$ (resp. $\sigma\rightarrow 0$)}.
\end{equation*}
\end{corollary}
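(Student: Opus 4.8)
The plan is to read off the difference $r(t)-\lambda(t)$ directly from the integral equation of Proposition~\ref{Prop:1} and then estimate it uniformly in $t$, feeding in the control of $\max_{t\ge 0}r(t)$ supplied by Proposition~\ref{Prop:2} and Remark~\ref{uniformRemark}. Concretely, subtracting $\lambda(t)$ from both sides of Eq.~\eqref{integrodiffeq} gives
\[
r(t)-\lambda(t)=\frac{\sigma^{2}}{\beta}\int_{0}^{t}r^{2}(s)\bigl[e^{\beta(s-t)}-e^{2\beta(s-t)}\bigr]\,ds,
\]
so it suffices to bound the right-hand side uniformly in $t\ge 0$.

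First I would observe that for $0\le s\le t$ the kernel is nonnegative, since $\beta(s-t)\ge 2\beta(s-t)$ when $s\le t$, whence $e^{\beta(s-t)}-e^{2\beta(s-t)}\ge 0$; this lets me bound the integral without worrying about cancellation. Writing $R:=\max_{t\ge 0}r(t)$ and using $r^{2}(s)\le R^{2}$, the substitution $u=t-s$ yields
\[
\int_{0}^{t}\bigl[e^{\beta(s-t)}-e^{2\beta(s-t)}\bigr]\,ds=\int_{0}^{t}\bigl[e^{-\beta u}-e^{-2\beta u}\bigr]\,du\le\int_{0}^{\infty}\bigl[e^{-\beta u}-e^{-2\beta u}\bigr]\,du=\frac{1}{2\beta}.
\]
Combining these two facts produces the clean uniform estimate
\[
\max_{t\ge 0}\lvert r(t)-\lambda(t)\rvert\le R^{2}\,\frac{\sigma^{2}}{2\beta^{2}}.
\]

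The final step is to control the prefactor $R^{2}$ asymptotically. By Remark~\ref{uniformRemark}, $R=\max_{t\ge 0}r(t)\to\max_{t\ge 0}\lambda(t)$ as $\beta\to\infty$ (resp.\ as $\sigma\to 0$), so $R^{2}=\bigl[\max_{t\ge 0}\lambda(t)\bigr]^{2}+o(1)$ in each limit. Substituting into the estimate above, and using that $o(1)\cdot\frac{\sigma^{2}}{2\beta^{2}}$ is $o(\beta^{-2})$ in the first limit and $o(\sigma^{2})$ in the second (with $\beta$ held fixed), gives exactly the claimed bound with leading term $\bigl[\max_{t\ge 0}\lambda(t)\bigr]^{2}\frac{\sigma^{2}}{2\beta^{2}}$. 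The uniform convergence $r(t)\to\lambda(t)$ then follows at once, since the right-hand side tends to zero.

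I do not expect a genuine obstacle here, as the argument is a direct application of Propositions~\ref{Prop:1} and~\ref{Prop:2} rather than a new estimate. The one point that needs care is keeping the error term honest: one must check that it is truly $o(\beta^{-2})$ (resp.\ $o(\sigma^{2})$) and not merely $O(\beta^{-2})$, which is precisely why I would invoke the convergence $R\to\max_{t\ge 0}\lambda(t)$ from Remark~\ref{uniformRemark} rather than only the uniform boundedness of $R$. If an explicit next-order coefficient were wanted, expanding the square-root bound of Proposition~\ref{Prop:2} as $R\le\max_{t\ge 0}\lambda(t)+\bigl[\max_{t\ge 0}\lambda(t)\bigr]^{2}\frac{\sigma^{2}}{2\beta^{2}}+o(\beta^{-2})$ would make the correction fully explicit, but this refinement is not needed for the stated result.
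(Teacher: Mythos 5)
Your proposal is correct and follows essentially the same argument as the paper: subtract $\lambda(t)$ in the integral equation of Proposition~\ref{Prop:1}, bound $r^{2}(s)$ by $\left[\max_{t\geq 0}r(t)\right]^{2}$ and the kernel integral by $\int_{0}^{\infty}[e^{-\beta u}-e^{-2\beta u}]\,du=\frac{1}{2\beta}$, then use Remark~\ref{uniformRemark} to replace $\max_{t\geq 0}r(t)$ by $\max_{t\geq 0}\lambda(t)$ up to an $o(1)$ term, yielding the $o(\beta^{-2})$ (resp.\ $o(\sigma^{2})$) error. The only cosmetic difference is that you assert full convergence $R\to\max_{t\geq 0}\lambda(t)$ where the paper only uses the $\limsup$ upper bound from the Remark; since only an upper bound on $R^{2}$ is needed, this is immaterial (and your stronger claim is in fact justified by the nonnegativity of the kernel, which gives $r(t)\geq\lambda(t)$).
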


\begin{proof}
From Proposition \ref{Prop:1}, Proposition \ref{Prop:2} and Remark \ref{uniformRemark}, we get
\begin{align*}
\max_{t\geq 0}|r(t)-\lambda(t)|
&\leq\max_{t\geq 0}\frac{\sigma^{2}}{\beta}
\int_{0}^{t}r^2(t-s)[e^{-\beta s}-e^{-2\beta s}]ds
\\
&\leq\left[\max_{t\geq 0}r(t)\right]^{2}\frac{\sigma^{2}}{\beta}\int_{0}^{\infty}[e^{-\beta s}-e^{-2\beta s}]ds
\\
&\leq\left[\max_{t\geq 0}\lambda(t)\right]^{2}\frac{\sigma^{2}}{2\beta^{2}}+o(\beta^{-2})
\quad(\text{resp. } o(\sigma^{2})),
\quad\text{as $\beta\rightarrow\infty$ (resp. $\sigma\rightarrow 0$).}
\nonumber
\end{align*}
\end{proof}

We can also study the stationary limit for the deterministic system, that is, 
the large time limit for the deterministic system. 

\begin{proposition}
Let us assume that $\lim_{t\rightarrow\infty}\lambda(t)=\lambda(\infty)$ and $\lim_{t\rightarrow\infty}r(t)=r(\infty)$ exist, then
\begin{equation}\label{rinfty}
r(\infty)=\frac{\beta^{2}}{\sigma^{2}}\left(1-\sqrt{1-\lambda(\infty)\frac{2\sigma^{2}}{\beta^{2}}}\right).
\end{equation}
\end{proposition}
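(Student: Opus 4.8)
The plan is to pass to the limit $t\to\infty$ directly in the integral equation \eqref{integrodiffeq} of Proposition \ref{Prop:1} and to read off $r(\infty)$ as a root of the resulting quadratic. First I would change variables $s\mapsto t-s$ in \eqref{integrodiffeq} to put it in the convolution form
\begin{equation*}
r(t)=\lambda(t)+\frac{\sigma^{2}}{\beta}\int_{0}^{t}r^2(t-s)\left[e^{-\beta s}-e^{-2\beta s}\right]ds,
\end{equation*}
which isolates the nonnegative, integrable kernel $e^{-\beta s}-e^{-2\beta s}$ and puts the entire $t$-dependence of the integrand inside the factor $r^2(t-s)$.

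The main step is to justify interchanging the limit $t\to\infty$ with the integral, and this is the part I expect to require the most care. Since $r(t)\to r(\infty)$ is assumed to be a finite limit and $r$ is continuous, the quantity $M:=\sup_{t\geq 0}|r(t)|$ is finite. Extending the integrand to $[0,\infty)$ by inserting the indicator $\mathbf{1}_{[0,t]}(s)$, I would observe that for each fixed $s$ the integrand converges pointwise to $r^2(\infty)\left[e^{-\beta s}-e^{-2\beta s}\right]$ as $t\to\infty$ (because $t-s\to\infty$), and that it is dominated uniformly in $t$ by the integrable function $M^{2}\left[e^{-\beta s}-e^{-2\beta s}\right]$. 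Dominated convergence then yields
\begin{equation*}
\int_{0}^{t}r^2(t-s)\left[e^{-\beta s}-e^{-2\beta s}\right]ds\longrightarrow r^2(\infty)\int_{0}^{\infty}\left[e^{-\beta s}-e^{-2\beta s}\right]ds=\frac{r^2(\infty)}{2\beta}.
\end{equation*}
This is the only place where the boundedness supplied by the assumed convergence is essential.

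Taking $t\to\infty$ in the displayed convolution equation then gives the fixed-point relation $r(\infty)=\lambda(\infty)+\frac{\sigma^{2}}{2\beta^{2}}r^2(\infty)$, equivalently the quadratic $\frac{\sigma^{2}}{2\beta^{2}}r^2(\infty)-r(\infty)+\lambda(\infty)=0$, whose two roots are $\frac{\beta^{2}}{\sigma^{2}}\bigl(1\mp\sqrt{1-\lambda(\infty)\frac{2\sigma^{2}}{\beta^{2}}}\bigr)$; both are real under the existence of a finite limit, which forces $\lambda(\infty)\leq\beta^{2}/(2\sigma^{2})$. It remains only to select the branch, and here I would take the minus sign, discarding the larger root $R_{2}=\frac{\beta^{2}}{\sigma^{2}}\bigl(1+\sqrt{1-\lambda(\infty)\frac{2\sigma^{2}}{\beta^{2}}}\bigr)$: the latter diverges as $\sigma\to 0$ or $\beta\to\infty$ and is therefore incompatible with the uniform bound $\max_{t\geq 0}r(t)\leq R_{1}$ of Proposition \ref{Prop:2} and with the limiting behavior $r(\infty)\to\lambda(\infty)$ recorded in Remark \ref{uniformRemark} and Corollary \ref{corollary1}. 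The surviving root is precisely \eqref{rinfty}.
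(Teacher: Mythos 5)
Your proposal is correct and takes essentially the same route as the paper: both pass to the limit $t\to\infty$ in the convolution form of the integral equation of Proposition~\ref{Prop:1} to obtain the fixed-point relation $r(\infty)=\lambda(\infty)+\frac{\sigma^{2}}{2\beta^{2}}r^{2}(\infty)$ and then take the smaller root, with your dominated-convergence step (using that continuity plus the assumed finite limit gives $\sup_{t\geq 0}|r(t)|<\infty$) simply making explicit the limit interchange the paper performs silently. The one soft spot in your write-up --- the branch selection, since your appeal to Proposition~\ref{Prop:2} and Corollary~\ref{corollary1} rigorously excludes the larger root only in the large-$\beta$ or small-$\sigma$ regime rather than for arbitrary fixed $(\beta,\sigma)$ --- is shared with the paper itself, which asserts ``the smaller root'' with no justification at all, so your argument is if anything more complete than the original.
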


\begin{proof}
Let us recall that
\begin{equation}
r(t)=\lambda(t)+\frac{\sigma^{2}}{\beta}\int_{0}^{t}(r(t-s))^{2}
[e^{-\beta s}-e^{-2\beta s}]ds.
\end{equation}
Thus the large time limit $r(\infty)$ is the smaller root of the equation
\begin{equation}
r(\infty)=\lambda(\infty)+r(\infty)^{2}\frac{\sigma^{2}}{\beta}\int_{0}^{\infty}[e^{-\beta s}-e^{-2\beta s}]ds,
\end{equation}
which yields \eqref{rinfty}.
\end{proof}

For constant $\lambda(t)\equiv\lambda_{0}$, one can study the 
stability of the two-dimensional first-order ODE
\begin{eqnarray}
r'(t) &=& y(t) - \beta r(t) + \beta \lambda_0 ,\\
y'(t) &=& \sigma^2 r^2(t) - 2\beta y(t) .
\end{eqnarray}

We would like to determine the fixed points of this equation, and determine 
their type.
The fixed points are given by the zeros of the functions on the RHS of the ODE. 
For $\beta^2 < \beta_C^2 := 2\lambda_0 \sigma^2$ there are no fixed points. 
For $\beta^2 = \beta_C^2$ there is one at $r_1=2\lambda_0$, 
$y_1 = \lambda_0 \beta_C$.
For $\beta^2 > \beta_C^2$ there are two fixed points:
\begin{eqnarray}
\Pi_{1,2} &:& r_{1,2} = \frac{\beta^2}{\sigma^2}
 \left(1 \mp \sqrt{\Delta}\right)\,,\quad
y_{1,2} =  \frac{\beta^3}{2\sigma^2}
 \left(1 \mp \sqrt{\Delta}\right)^2 \,.
\end{eqnarray}
with $\Delta := 1 - \frac{2\lambda_0\sigma^2}{\beta^2}$.
Linearization of the ODE around each fixed point gives the linear ODE
\begin{equation}
\left( 
\begin{array}{c}
r'(t) \\
y'(t) \\
\end{array}
\right) = 
\left( 
\begin{array}{cc}
- \beta & 1 \\
2\sigma^2 r_i & -2\beta \\
\end{array}
\right).
\left( 
\begin{array}{c}
r(t) - r_i \\
y(t) - y_i \\
\end{array}
\right) + \mbox{ higher order terms }\,.
\end{equation}






The eigenvalues of the matrix of coefficients for each fixed point are:
\begin{eqnarray}
\Pi_1 &:& \lambda_{1,2} = \frac12 \left(-3\beta \pm 3\beta 
 \sqrt{1 - \frac89 \sqrt{\Delta}} \right),\\
\Pi_2 &:& \lambda_{1,2} = \frac12 \left(-3\beta \pm 3\beta 
 \sqrt{1 + \frac89 \sqrt{\Delta}} \right) \,.
\end{eqnarray}

The relative signs of the two eigenvalues 
determine the type of the fixed points: $\Pi_1$ is an attractive fixed 
point ($\lambda_1, \lambda_2 < 0$), and 
$\Pi_2$ is a saddle point ($\lambda_1>0, \lambda_2 < 0$). 
This means that the two-dimensional ODE describes a flow for $(r(t), y(t))$,
and the lines of flow can either end at $\Pi_1$, or at infinity, avoiding $\Pi_2$.

\section{Explosion criteria}

We study in this Section in more detail the explosion time of $r(t)$, 
defined as $\tau_\infty := \sup\{\tau : r(\tau) < \infty \} $.
The starting point of the analysis is the 2nd order ODE for $r(t)$
\begin{equation}\label{ODEfull}
r''(t) + 3\beta r'(t) + 2\beta^2 r(t) =
\sigma^2 r^2(t) + 2\beta^2 \lambda_0 + \Lambda(t), 
\end{equation}
where we defined $\Lambda(t):= 2\beta^2 (\lambda(t) - \lambda(0))
+ 3\beta \lambda'(t) + \lambda''(t)$.

We will make the following assumption about the initial forward rate $\lambda(t)$.
\begin{assumption}\label{assump1}
Assume that $\lambda(t)$ satisfies the condition
\begin{equation}
\Lambda(t) := 2\beta^2 (\lambda(t) - \lambda(0))
+ 3\beta \lambda'(t) + \lambda''(t) \geq 0\,,\qquad  t \geq 0\,.
\end{equation}
\end{assumption} 
This is satisfied by forward rate curves $\lambda(t)$ which are flat or
up-sloping and not too concave, which are usual in normal market conditions. 

Under this assumption, $r(t)$ satisfies the differential inequality,
which reduces to equality in the limit of a constant $\lambda(t)=\lambda_0$
\begin{equation}\label{ODEfull3}
r''(t) + 3\beta r'(t) + 2\beta^2 r(t) \geq
\sigma^2 r^2(t) + 2\beta^2 \lambda_0 ,
\end{equation}
with initial conditions  
\begin{equation}\label{ic}
r(0) = \lambda_0,\qquad r'(0) = 0\,.
\end{equation}

\subsection{Zero mean-reversion case $\beta=0$}

\begin{proposition}\label{prop:beta0}
The explosion time of $r(t)$ in the one-factor log-normal quasi-Gaussian 
HJM model under the Assumption~\ref{assump1} with $\beta=0$ is bounded from 
above as
\begin{equation}\label{ineqeq}
\tau_\infty \leq \frac{\sqrt{6p_0} \omega_2}{\sigma \sqrt{\lambda_0}} = 
\frac{2.97448}{\sigma \sqrt{\lambda_0}},
\end{equation}
where $p_0=0.62996$, $\omega_2=1.52995$ are parameters of the 
Weierstrass elliptic function \cite{WW}. The inequality in \eqref{ineqeq} 
becomes sharp when $\lambda(t)\equiv\lambda_0$.
\end{proposition}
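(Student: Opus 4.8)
The plan is to specialize the differential inequality \eqref{ODEfull3} to $\beta=0$, where it collapses to $r''(t)\geq \sigma^2 r^2(t)$ with the initial data $r(0)=\lambda_0$, $r'(0)=0$ from \eqref{ic}, and then to extract an explicit first integral by an energy argument. First I would observe that since $r''\geq \sigma^2 r^2\geq 0$, the function $r$ is convex; combined with $r'(0)=0$ this forces $r'(t)\geq 0$, and hence $r(t)\geq \lambda_0>0$, throughout the interval of existence. This monotonicity is the crucial structural fact, because it lets me multiply the inequality by the nonnegative factor $r'(t)$ without reversing it: $r'r''\geq \sigma^2 r^2 r'$. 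Integrating from $0$ to $t$ and using the initial data then yields $\tfrac12 (r'(t))^2 \geq \tfrac{\sigma^2}{3}\bigl(r^3(t)-\lambda_0^3\bigr)$, i.e. $r'(t)\geq \sigma\sqrt{2/3}\,\sqrt{r^3(t)-\lambda_0^3}$.

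Next I would separate variables in this inequality. Since $r$ is increasing, the substitution $v=r(t)$ gives, for every $t<\tau_\infty$,
\begin{equation*}
\sigma\sqrt{\tfrac{2}{3}}\;t \leq \int_{\lambda_0}^{r(t)}\frac{dv}{\sqrt{v^3-\lambda_0^3}}\leq \int_{\lambda_0}^{\infty}\frac{dv}{\sqrt{v^3-\lambda_0^3}}.
\end{equation*}
The right-hand integral converges (the singularity at $v=\lambda_0$ is of order $(v-\lambda_0)^{-1/2}$ and the tail decays like $v^{-3/2}$), which simultaneously shows that $r$ reaches infinity in finite time and that
\begin{equation*}
\tau_\infty \leq \sqrt{\tfrac{3}{2}}\,\frac{1}{\sigma}\int_{\lambda_0}^{\infty}\frac{dv}{\sqrt{v^3-\lambda_0^3}}
= \frac{1}{\sigma\sqrt{\lambda_0}}\sqrt{\tfrac{3}{2}}\int_{1}^{\infty}\frac{dw}{\sqrt{w^3-1}},
\end{equation*}
where the last equality uses the rescaling $v=\lambda_0 w$ to isolate the $\lambda_0^{-1/2}$ dependence.

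It then remains to evaluate $\int_1^\infty (w^3-1)^{-1/2}\,dw$ in closed form. I would identify it with the equianharmonic Weierstrass function of \cite{WW}: in the case $g_2=0$, $g_3=1$ the cubic $4p^3-1$ has the single real root $p_0=4^{-1/3}=0.62996$, and the associated half-period is $\omega_2=\int_{p_0}^{\infty}(4p^3-1)^{-1/2}\,dp$. The substitution $p=p_0 w$ together with $4p_0^3=1$ gives $\omega_2=p_0\int_1^\infty (w^3-1)^{-1/2}\,dw$, so that $\int_1^\infty (w^3-1)^{-1/2}\,dw=\omega_2/p_0$. Substituting this and simplifying with the identity $p_0^3=\tfrac14$, which converts $\sqrt{3/2}\,/p_0$ into $\sqrt{6p_0}$, produces exactly the stated bound $\tau_\infty\leq \sqrt{6p_0}\,\omega_2/(\sigma\sqrt{\lambda_0})$, and the quoted numerical value $2.97448$ is recovered. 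Finally, for the flat curve $\lambda(t)\equiv\lambda_0$ the inequality \eqref{ODEfull3} is an equality, so every estimate above holds with equality and the bound is attained, which gives the sharpness claim. The only delicate point is the monotonicity step that validates multiplying the differential inequality by $r'$: once $r'\geq0$ is secured the rest is a standard separable reduction followed by an elliptic-integral identification.
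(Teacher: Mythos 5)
Your proof is correct, and it takes a genuinely different route from the paper. The paper reduces the inequality $r''\geq\sigma^2 r^2$ to the equality case via the comparison result of Lemma~\ref{lemma:1} (a Gronwall-type argument in the Appendix), then solves $Z''=\sigma^2 Z^2$ exactly in terms of the Weierstrass function $\wp$ and reads off the explosion time from the location of the nearest pole of $\wp$. You instead work directly with the differential inequality: the convexity observation gives $r'\geq 0$, the energy estimate $\tfrac12 (r')^2\geq\tfrac{\sigma^2}{3}(r^3-\lambda_0^3)$ follows by multiplying by the nonnegative factor $r'$, and separation of variables yields the bound as the explicit elliptic integral $\sqrt{3/2}\,\sigma^{-1}\int_{\lambda_0}^\infty (v^3-\lambda_0^3)^{-1/2}dv$, identified at the end with $\sqrt{6p_0}\,\omega_2/(\sigma\sqrt{\lambda_0})$ via the equianharmonic case $g_2=0$, $g_3=1$ (your algebra with $4p_0^3=1$ checks out, and the numerics agree). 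Your approach is more self-contained — it bypasses Lemma~\ref{lemma:1} entirely and never needs the closed-form trajectory — and it is in fact the $\beta=0$ specialization of the paper's own method for Proposition~\ref{Prop:5}: your energy identity says $y(x)=\tfrac{2\sigma^2}{3}(x^3-\lambda_0^3)$ solves \eqref{yeq} with $\beta=0$ and your bound is exactly \eqref{texp2}, so it unifies the two regimes. What the paper's route buys in exchange is the full explicit solution $Z(t)$, not just the explosion time, which is used for the qualitative discussion around Fig.~\ref{Fig:wp}. One small point worth making explicit in your write-up: to divide by $\sqrt{r^3(t)-\lambda_0^3}$ you need $r(t)>\lambda_0$ for $t>0$, which follows from $r''\geq\sigma^2\lambda_0^2>0$ (so $r$ is strictly increasing for $t>0$); your remark that the singularity at $v=\lambda_0$ is integrable then makes the separated inequality rigorous as an improper integral. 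With that noted, the sharpness claim for $\lambda(t)\equiv\lambda_0$ is also fine, since every inequality in your chain becomes an equality and the separated relation forces $r(t)\to\infty$ exactly as $t$ approaches the stated bound.
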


\begin{proof}
Taking $\beta=0$ in (\ref{ODEfull}) we get the differential
inequality
\begin{equation}\label{ODE}
r''(t) \geq \sigma^2 r^2(t)  \,.
\end{equation}
with initial conditions (\ref{ic}).

By Lemma~\ref{lemma:1} (see Appendix) it is sufficient to study the solution 
of the ODE obtained by replacing the inequality sign in (\ref{ODE}) with
equality, and satisfying the same initial condition at $t=0$. 
The solution of this equation $Z''(t)=\sigma^{2}Z^{2}(t)$
with $Z(0)=\lambda_{0}$, $Z'(0)=0$ can be found exactly in 
terms of the Weierstrass elliptic function $\wp(z;c_1,c_2)$, and is given by 
\begin{equation}\label{ZODE}
Z(t) = \frac{6^{1/3}}{\sigma^{2/3}}
\wp\left(\left(\frac{\sigma^2}{6}\right)^{1/3}(t + c_1); 0,c_2\right),
\end{equation}
where the constants $c_1,c_2$ are given by
\begin{equation}\label{c12sol}
c_1 = \sqrt{6p_0} \omega_2 \frac{1}{\sigma \sqrt{\lambda_0}}\,,\qquad
c_2 = \frac{1}{6p_0^3} \lambda_0^3 \sigma^2 ,
\end{equation}
with $p_0=0.62996$, $\omega_2=1.52995$.
This can be simplified further by using the relation \cite{WW}
$\wp(z;0,c_2) = c_2^{1/3} \wp(z c_2^{1/6}; 0,1)$
to rescale the second argument $c_2$ to 1.

\begin{figure}
    \centering
   \includegraphics[width=2.8in]{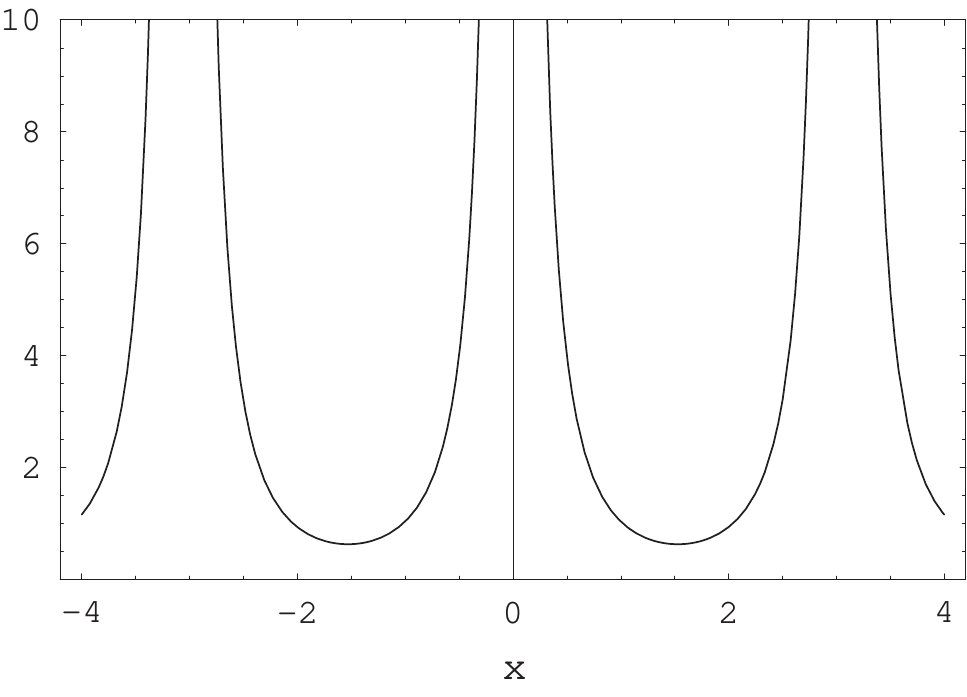}
    \caption{ Plot of the Weierstrass elliptic function $\wp(x;0,1)$. }
\label{Fig:wp}
 \end{figure}

The main features of the solution can be understood by recalling the main 
properties of $\wp(z;0,1)$ \cite{WW}. Along the real axis this function is 
periodic with
half-period $\omega_2 \simeq 1.52995$, and has double poles at origin $z=0$
and at all points uniformly spaced by $2\omega_2$. Half-way between poles at
$z=\omega_2$ it reaches a minimum value $p_0=0.62996$. The plot of 
$\wp(z;0,1)$ is shown in Fig.~\ref{Fig:wp}.

Time zero corresponds to $z=\omega_2$ and the explosion of $r(t)$ takes place at the 
nearest pole, at $z = 2\omega_2$. This gives for the explosion time of the
solution $Z(t)$, defined as $\bar \tau_\infty := \sup\{ t : Z(t) < \infty\}$
\begin{equation}\label{texp}
\bar \tau_\infty = c_1 = \frac{1}{\sigma \sqrt{\lambda_0}} \sqrt{6p_0} \omega_2 =
2.97448 \frac{1}{\sigma \sqrt{\lambda_0}}\,.
\end{equation}
For constant forward rate $\lambda(t)=\lambda_0$ we have 
$\tau_\infty = \bar \tau_\infty$. Allowing for non-constant $\lambda(t)$ 
satisfying Assumption~\ref{assump1} we have the upper bound on the
explosion time $\tau_\infty \leq \bar \tau_\infty$.
\end{proof}

As a numerical illustration 
we consider the following typical model parameters: constant forward rate 
$\lambda_0 = 5\%$ and volatility $\sigma=20\%$.
The explosion time of the short rate $r(t)$ is 
$\tau_\infty=66.5Y$. This decreases as the volatility $\sigma$
increases or the forward rate $\lambda_0$ increases.

\subsection{Positive mean-reversion case $\beta>0$}

It was shown in Proposition~\ref{Prop:2} that in the
limit $\beta \to \infty$ there is no explosion in $r(t)$. On the other hand,
for $\beta=0$ there is explosion (Proposition~\ref{prop:beta0}). This implies that as $\beta$ increases
from zero, there is a maximum value of $\beta$ for which there is explosion.
For constant $\lambda(t) = \lambda_0$ this maximum value can be found explicitly,
and is given by the following result. For $\beta$ less than this
maximum value, the result can be extended to a wider class of 
functions $\lambda(t)$, satisfying Assumption~\ref{assump1}.

\begin{proposition}\label{Prop:5}
The critical value $\beta_C:= \sigma\sqrt{2\lambda_0}$ separates the 
solutions of (\ref{ODEfull}) 
into two families, with different qualitative behavior:

(i) Small mean-reversion $\beta < \beta_{\rm C}$. Under the Assumption~\ref{assump1}
the explosion time of the solution for $r(t)$ is bounded from above as
$\tau_\infty \leq \bar \tau_\infty$ with
\begin{equation}\label{texp2}
\bar\tau_\infty = \int_{\lambda_0}^\infty \frac{dx}{\sqrt{y(x)}},
\end{equation}
where $y(x)$ is the solution of the first order nonlinear ODE
\begin{equation}\label{yeq}
\frac12 y'(x) + 3\beta \sqrt{y(x)} = 
 \sigma^2 x^2 - 2\beta^2 x + 2\beta^2 \lambda_0 \,,
\end{equation}
with initial condition $y(\lambda_0)=0$. The inequality 
$\tau_\infty \leq \bar \tau_\infty$ becomes sharp for
$\lambda(t) = \lambda_0$.

(ii) Large mean-reversion $\beta \geq \beta_{\rm C}$. Assuming constant
$\lambda(t) = \lambda_0$, the solution for $r(t)$ does not reach infinity.
It satisfies $\lim_{t\to \infty} r(t) = x_1$ with $x_1$ given by (\ref{x12sol}).
For general $\lambda(t)$ a numerical study is required in order to
decide the absence or presence of an explosion for $r(t)$.  
\end{proposition}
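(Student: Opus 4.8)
The plan is to reduce the second-order equation \eqref{ODEfull} to a first-order equation by the standard phase-plane (energy) substitution, and then to read off the explosion behaviour from a single quadrature. Setting $y(x):=(r'(t))^2$ regarded as a function of $x:=r(t)$, the chain rule gives $r'(t)=\sqrt{y(x)}$ and $r''(t)=\tfrac12 y'(x)$ along any increasing arc of the trajectory; substituting into the equality version of \eqref{ODEfull} with $\lambda(t)=\lambda_0$ yields exactly \eqref{yeq}, while the initial data \eqref{ic} become $y(\lambda_0)=0$. Since $dt=dx/\sqrt{y(x)}$, integrating from the initial value $x=\lambda_0$ to the explosion value $x=\infty$ produces the representation \eqref{texp2} of the explosion time. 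For non-constant $\lambda(t)$ satisfying Assumption~\ref{assump1}, the solution obeys the differential inequality \eqref{ODEfull3}, so by the comparison result Lemma~\ref{lemma:1} its explosion time is bounded above by that of the equality solution; this gives $\tau_\infty\le\bar\tau_\infty$ with equality when $\lambda(t)\equiv\lambda_0$, and reduces both parts to the study of \eqref{yeq}.

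The whole dichotomy is then governed by the sign of the quadratic forcing $Q(x):=\sigma^2 x^2-2\beta^2 x+2\beta^2\lambda_0$ on the right-hand side of \eqref{yeq}, whose discriminant equals $4\beta^2(\beta^2-\beta_C^2)$. For part~(i), $\beta<\beta_C$ makes the discriminant negative, so $Q(x)>0$ for all $x$. I would first show $y(x)>0$ on $(\lambda_0,\infty)$: if $y$ vanished at some finite $x^\ast>\lambda_0$, then \eqref{yeq} would force $y'(x^\ast)=2Q(x^\ast)>0$, contradicting a decrease to zero; hence the trajectory never turns around and $r$ increases without bound. Convergence of the quadrature at infinity is then immediate from the growth of $Q$: since $Q(x)\sim\sigma^2 x^2$ the $3\beta\sqrt{y}$ term is lower order, so $y(x)\sim\tfrac23\sigma^2 x^3$ and the integrand in \eqref{texp2} decays like $x^{-3/2}$, giving a finite $\bar\tau_\infty$ and hence a genuine explosion.

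For part~(ii), $\beta\ge\beta_C$ makes $Q$ nonnegative with a root (double, when $\beta=\beta_C$), and the two roots coincide with the fixed-point abscissae $r_{1,2}=\tfrac{\beta^2}{\sigma^2}(1\mp\sqrt{\Delta})$, $\Delta:=1-\tfrac{2\lambda_0\sigma^2}{\beta^2}$. Because the vertex $\beta^2/\sigma^2\ge 2\lambda_0>\lambda_0$ lies to the right of $\lambda_0$ and $Q(\lambda_0)=\sigma^2\lambda_0^2>0$, the starting point obeys $\lambda_0<x_1:=r_1$. On $(\lambda_0,x_1)$ one again has $Q>0$, hence $y>0$ and $r'=\sqrt{y}>0$, so $r$ is strictly increasing and trapped below $x_1$. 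The point is now that the quadrature \emph{diverges} at $x_1$: near the simple root one has $Q(x)\approx 2\beta^2\sqrt{\Delta}\,(x_1-x)$, and balancing $\tfrac12 y'+3\beta\sqrt y=Q$ in the quasi-static regime gives $y(x)\sim\tfrac{4\beta^2\Delta}{9}(x_1-x)^2$, so $1/\sqrt{y}\sim(x_1-x)^{-1}$ is non-integrable. Thus $r$ reaches $x_1$ only as $t\to\infty$; being monotone and bounded it converges, and the limit must be the fixed point $x_1=r_1$. The borderline $\beta=\beta_C$ is identical except that the double root gives the faster vanishing $Q(x)\approx\sigma^2(x-2\lambda_0)^2$, hence $y\sim(x-2\lambda_0)^4$ and an even stronger divergence of the quadrature, so $r(t)\to 2\lambda_0=x_1$ with no explosion.

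The main obstacle I anticipate is the local analysis of \eqref{yeq} at the root $x_1$ in part~(ii): the reduced equation is singular there, since the leading $\tfrac12 y'$ and the $3\beta\sqrt y$ terms both degenerate as $Q\to0$, so the quasi-static balance used above must be justified rather than assumed. I would make it rigorous by inserting barriers of the form $c_\pm(x_1-x)^2$ as sub- and super-solutions of \eqref{yeq} on a left neighbourhood of $x_1$, choosing $c_\pm$ from $Q'(x_1)=-2\beta^2\sqrt{\Delta}$, and invoking the first-order comparison principle to sandwich $y$; this simultaneously pins down $y(x)\to0$ and the non-integrability of $1/\sqrt{y}$, and hence the convergence $r(t)\to x_1$. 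Equivalently, one can bypass the singular quadrature and argue in the phase plane: $\Pi_1$ is the attractive node and $\Pi_2$ the saddle established earlier, the monotone trapped trajectory from $(\lambda_0,0)$ lies in the basin of $\Pi_1$, and therefore flows into it without escaping past the saddle.
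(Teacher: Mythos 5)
Your proposal is essentially the paper's own proof: the same phase-plane inversion $y(x)=(r'(t))^2$ with $x=r(t)$ reducing \eqref{ODEfull} to \eqref{yeq}, the same quadrature \eqref{texp2}, Lemma~\ref{lemma:1} to handle non-constant $\lambda$ under Assumption~\ref{assump1}, and the same dichotomy read off the sign of the quadratic forcing, with convergence at infinity from $y\sim\tfrac23\sigma^2x^3$ and divergence of the time integral from the quadratic vanishing of $y$ near the smaller root (the paper gets this via the expansion of $y$ about its minimum $x_0$ as $\beta\to\beta_C$, you directly at $x_1$ --- a cosmetic difference). One small correction: near $x_1$ the term $\tfrac12 y'$ is of the \emph{same} order as $3\beta\sqrt{y}$ and $Q$, so your quasi-static constant $\tfrac{4\beta^2\Delta}{9}$ is only the $\Delta\to0$ limit --- matching all three terms in the ansatz $y=c(x_1-x)^2$ gives $\sqrt{c}=\tfrac{3\beta}{2}\bigl(1-\sqrt{1-\tfrac{8}{9}\sqrt{\Delta}}\bigr)$, i.e.\ precisely the $\Pi_1$ eigenvalue --- but the quadratic order, hence the non-integrability of $1/\sqrt{y}$ and the conclusion $r(t)\to x_1$, is unaffected, and you explicitly flagged this local balance (and the basin-of-attraction claim for $\Pi_1$) as needing the barrier justification, a point the paper's own proof leaves equally informal.
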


\begin{proof}
We start by assuming $\lambda(t) = \lambda_0$, and prove the equation
(\ref{yeq}).
For sufficiently small $\beta$,  $r(t)$ is a strictly increasing function 
of $t$ and can be inverted.
Denote $v(x) := r'(r^{-1}(x))$. It is easy to see that this satisfies 
\begin{equation}
v'(x) = \frac{d}{dx} v(x) = r''(t) \frac{1}{r'(t)}.
\end{equation}
Multiplying with $v(x)$ and using (\ref{ODEfull}) we get that this
function satisfies the differential equation
\begin{equation}\label{veq}
v(x) v'(x) + 3\beta v(x) = \sigma^2 x^2 - 2\beta^2 x + 2\beta^2 \lambda_0 \,.
\end{equation}
(With the substitution $z=-3\beta x$, 
the equation (\ref{veq}) can be brought into the canonical form of 
the Abel equation of the second kind $v(z) v'(z) - v(z) = f(z)$.)
This equation is difficult to simulate numerically as its behavior 
near $x=\lambda_0$ is singular, 
i.e. $\lim_{x\to \lambda_0} v'(x) = \infty$. 
It is more convenient to define $y(x) = v^2(x)$, which
satisfies (\ref{yeq}) and has well-behaved behavior $\lim_{x\to \lambda_0}
y'(x) = 2\sigma^2\lambda_0^2 $. This reproduces Eq.~(\ref{yeq}).

We would like to study the finiteness of the integral in (\ref{texp2}).
For sufficiently small positive $\beta$, the function $v(x)=\sqrt{y(x)}$ is everywhere
positive. Its large $x$ asymptotics is $v(x) = c_1 x^{3/2} + O(x^{1/2})$, 
so the integral  (\ref{texp2}) converges at the upper limit of integration.
However, we will show that as $\beta$ increases, the solution $y(x)$ develops
a zero at a point $x_0$ as $\beta$ approaches a certain value 
$\beta_{\rm C}$. At this point the integral (\ref{texp2}) diverges.

Let us consider the behavior of the solution $v(x)$ as $\beta$ 
is increased from zero. 
Denote $f(x) := \sigma^2 x^2 - 2\beta^2 x  + 2\beta^2\lambda_0$
the function on the right hand side of (\ref{yeq}) (with $\Lambda(t)=0$). 
$f(x)$ is always positive for $\beta < \beta_C = \sigma\sqrt{2\lambda_0}$ 
and becomes negative in a region $(x_1,x_2)$ for $\beta > \beta_C$, with
\begin{equation}\label{x12sol}
x_{1,2} = \frac{\beta^2}{\sigma^2}
\left(1 \pm \sqrt{1 - \frac{2\sigma^2\lambda_0}{\beta^2}}\right)\,.
\end{equation}
For $\beta=\beta_C$ the function $f(x)$ vanishes at $x_{1}=x_{2}=2\lambda_0$.
These solutions satisfy the inequalities $x_{1,2} > \lambda_0$. 
As $\beta \to \infty$, the smallest solution approaches $\lambda_0$ from above: 
$\lim_{\beta \to \infty} x_1 = \lambda_0$. 
The points $x_{1,2}$ are fixed points for the equation (\ref{ODEfull}),
and correspond to the fixed points $\Pi_{1,2}$ for the system $(r(t),y(t))$
studied in Section~\ref{sec:3}.
Recall that $x_1$ is a stable fixed point, and $x_2$ is a saddle point. 

We note the following properties of the solution $y(x)$ of the equation (\ref{yeq}):

(1) $y(x)$ is an increasing function in a neighborhood of $x=\lambda_0$.
This follows from the fact that for any $\beta \geq 0$ we have 
$f(\lambda_0) > 0$. 

(2) For sufficiently large $\beta > \beta_C$, the solution must 
be a decreasing function in the region $(x_1,x_2)$. This follows from
\begin{equation}
\frac12 y'(x)  = f(x) - 3\beta \sqrt{y(x)} \leq f(x) < 0 \,,\quad
x_1 \leq x \leq x_2\,.
\end{equation}

From this analysis we conclude that the solution $y(x)$ has a minimum 
for sufficiently large $\beta$. Denote $x_0$ the position of this minimum. 
Let us study the behavior of $y(x)$ in the neighborhood of $x_0$. 
From (\ref{yeq}) we have
\begin{equation}
3\beta \sqrt{y(x_0)} = f(x_0) \,.
\end{equation}
From the properties of $f(x)$ discussed above, we see that as $\beta$ increases, 
$y(x_0)$ approaches zero as $\beta \to \beta_{\rm C}$. 
Thus $\lim_{\beta \to \beta_{\rm C}}
x_0 = 2\lambda_0$ must be the double root of $f(x)=0$ at $\beta=\beta_C$. 

Taking one derivative of (\ref{yeq}) with respect to $x$ we have
\begin{equation}
\frac12 y''(x) + \frac{3\beta}{2\sqrt{y(x)}} y'(x) = 2\sigma^2 x - 2\beta^2 .
\end{equation}
At $x=x_0$, we have $y'(x_0)=0$ which gives
\begin{equation}
\lim_{\beta \to \beta_0} \frac12 y''(x_0) = 2\sigma^2 x_0 - 2\beta^2 \,.
\end{equation}

Thus we get the expansion of the solution $y(x)$ around the minimum at $x_0$
\begin{equation}\label{yasympt}
y(x) =y(x_0) + 2(\sigma^2 x_0 - \beta^2) (x - x_0)^2 + O((x-x_0)^3),
\end{equation}
with $y(x_0) = f^2(x_0)/(3\beta)^2$. 
This shows that as $\beta \to \beta_{\rm C} $, the 
integral (\ref{texp2}) diverges due to the singularity at $x=x_0$.
\end{proof}

The qualitative behavior described above is seen in Figure~\ref{Fig:vcr} 
which shows numerical solutions for $v(x)$ for three values of $\beta$ around
the critical value $\beta_C$. For $\beta<\beta_C$ the solution has a minimum
(blue curve).
For $\beta = \beta_{\rm C}$, the function $v(x)$ vanishes
at $x_0=2\lambda_0$, and its support is $x \in [\lambda_0, 2\lambda_0]$. 
This is shown as the
black curve in Figure~\ref{Fig:vcr}. For $\beta > \beta_{\rm C}$ the
solution vanishes at $x_1 < 2\lambda_0$ (red curve).

\begin{figure}
    \centering
   \includegraphics[width=2.8in]{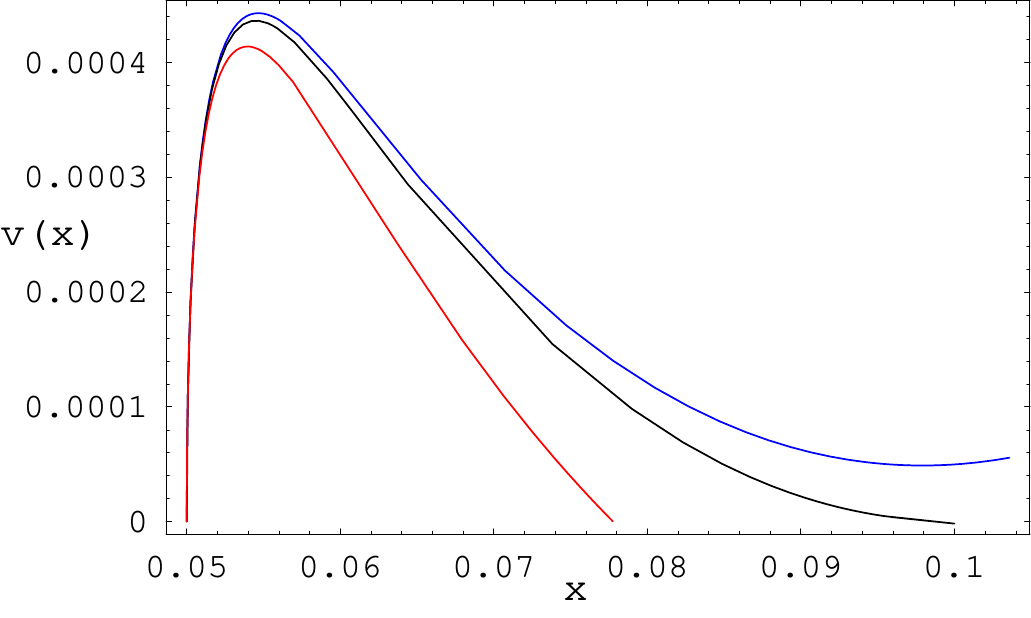}
    \caption{ Numerical solutions for $v(x)$ for  
$\beta<\beta_C$ (blue), $\beta=\beta_C$ (black)
and $\beta > \beta_C$ (red). The plots correspond to $\sigma=0.2, \lambda_0=0.05$
and $\beta=0.0625$ (blue),  $\beta=\beta_C=0.063$ (black) and $\beta=0.066$
(red).  }
\label{Fig:vcr}
 \end{figure}

\section{Eurodollar futures explosion}
\label{sec:6}

Consider the pricing of an Eurodollar futures contract on the rate $L(T,T+\delta)$
in the log-normal quasi-Gaussian HJM model. This requires the calculation of the
expectation of this rate $\mathbb{E}^{\mathbb{Q}}[L(T,T+\delta)]$ 
in the risk-neutral measure \cite{HK}.
Since the rate $L(T_1,T_2)$ is related to the zero coupon bond $P(T_1,T_2)$ as
$L(T_1,T_2) = \frac{1}{T_2-T_1}(P^{-1}(T_1,T_2) - 1)$, see e.g. \cite{APbook}, 
the pricing of the Eurodollar futures contract can be reduced to the 
evaluation  of the expectation value of the inverse zero coupon bond 
$P^{-1}(T,T+\delta)$ in the risk-neutral measure.

Using Eq.~(\ref{ZCB}) for the zero coupon bond price $P(t,T)$ we get
\begin{equation}
\mathbb{E}^{\mathbb{Q}}[P^{-1}(T,T+\delta)] = \frac{P(0,T)}{P(0,T+\delta)}
\mathbb{E}^{\mathbb{Q}}\left[\exp\left(G(T,T+\delta) x_T + \frac12 G^2(T,T+\delta) y_T \right)\right]\,.
\end{equation}
This expectation is bounded from below by the Jensen inequality as
\begin{align}
\mathbb{E}^{\mathbb{Q}}\left[\exp\left(G(T,T+\delta) x_T + \frac12 G^2(T,T+\delta) y_T \right)\right] 
&\geq
\exp\left(G(T,T+\delta) \mathbb{E}^{\mathbb{Q}}[ x_T] + \frac12 G^2(T,T+\delta) \mathbb{E}^{\mathbb{Q}}[y_T] \right) 
\\
&\geq
\exp\left(G(T,T+\delta) \mathbb{E}^{\mathbb{Q}}[ x_T]\right)\,,\nonumber
\end{align}
where we recall that $x_t=r_t-\lambda(t)$.
We used in the last inequality the positivity of $y_t$. 
For the small-noise limit, and sufficiently small $\beta$, the expectation on
the right-hand side explodes to infinity since we have 
$\lim_{t\to \tau_\infty} \mathbb{E}^{\mathbb{Q}}[ r_t]\simeq 
\lim_{t\to \tau_\infty} r(t)=\infty$.
We conclude thus that Eurodollar futures contract prices in this model must 
explode before the explosion time $\tau_\infty$.

The pricing of Eurodollar futures in the quasi-Gaussian HJM was considered in
\cite{CaZhu}, under several volatility specifications: normal, log-normal 
(identical to the model considered here) and square-root. The model was 
solved by finite differences methods. 
The numerical tests in \cite{CaZhu} assumed a flat yield curve at 
$\lambda_0=0.05$. Several values of the rates volatility were considered
$\sigma = 0.05,0.1,0.3$, and mean-reversion $\beta_1=0.01$ and $\beta_2=0.1$.
For all these cases one has $\beta_C \leq 0.1$. Typical values of the 
mean-reversion
parameter in fixed-income markets are $\beta \leq 0.1$, so the condition 
$\beta < \beta_C$ is satisfied unless $\sigma,\lambda_0$ are very small,
such that $\sigma \sqrt{\lambda_0} \leq 0.07$. 

Assuming $\beta_1 \simeq 0$, Proposition~\ref{prop:beta0} gives explosion 
times less than or equal to 266Y, 133Y and 44Y, respectively. These are much 
larger than typical maturities of Eurodollar futures contracts, which are 
listed on exchanges up to 10Y, although most of the liquid quotes are within 7Y.

\section{Summary and conclusions}

We studied the qualitative behavior of the solutions of the log-normal 
quasi-Gaussian HJM model using a deterministic approximation, which 
corresponds to the small Brownian noise limit. Under this approximation, 
we showed that the short rate may explode to infinity
in a finite time. This is relevant for the simulation and use of the model
for pricing financial derivatives. 

The small-noise solutions can be used to guide the construction of finite 
difference or tree approximations of the model SDE. The explosion phenomenon 
implies that Eurodollar futures prices may become infinite under certain
conditions on maturity and model parameters. Explicit explosion criteria are
presented, which give an upper bound on the explosion time, under weak 
assumptions on the shape of the initial yield curve.

\appendix

\section*{Appendix: A differential inequality}

We prove in this Appendix that the solutions of the differential inequality
(\ref{ODEfull3}) are bounded from 
below by the solution of the equation obtained by taking replacing the
inequality sign with equality.
Thus a sufficient condition for the explosions of the 
solution of (\ref{ODEfull}) under Assumption~\ref{assump1} is that the 
solution with $\Lambda(t)=0$ has an explosion. On the other hand, the 
absence of an explosion for the solution of the latter equation does not say 
anything about the presence or absence of an explosion in the former case.

\begin{lemma}\label{lemma:1}
Denote $R(t)$ the solution of the ODE obtained by replacing the inequality 
sign in (\ref{ODEfull3}) with equality, and satisfying the initial conditions 
(\ref{ic}). The solutions
of the differential inequality are bounded from below by $R(t)$
\begin{equation}
r(t) \geq R(t)\,, \qquad t \geq 0\,.
\end{equation}
\end{lemma}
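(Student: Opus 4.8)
The plan is to set $w(t) := r(t) - R(t)$ and reduce the claim $w \geq 0$ to a linear second-order differential inequality whose zeroth-order coefficient can be given a definite sign. Subtracting the equality satisfied by $R$ from the inequality \eqref{ODEfull3} satisfied by $r$, and factoring $r^2 - R^2 = (r+R)w$, I obtain
\begin{equation*}
w''(t) + 3\beta w'(t) + \bigl[\, 2\beta^2 - \sigma^2\bigl(r(t)+R(t)\bigr)\,\bigr]\, w(t) \geq 0,
\end{equation*}
with $w(0)=0$ and $w'(0)=0$ from \eqref{ic}. The obstruction to a direct maximum-principle argument is that the bracketed zeroth-order coefficient has no definite sign: for $\beta$ not too small it is positive near $t=0$, where $r+R \approx 2\lambda_0$, so the homogeneous operator could in principle oscillate and the comparison fail.

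The decisive step is the substitution $w(t) = e^{-\frac{3\beta}{2}t}\psi(t)$, which removes the first-order term and, crucially, shifts the zeroth-order coefficient downward by $\tfrac{9\beta^2}{4}$, overshooting into the favorable negative regime. A direct computation gives
\begin{equation*}
\psi''(t) + Q(t)\,\psi(t) \geq 0, \qquad Q(t) := -\frac{\beta^2}{4} - \sigma^2\bigl(r(t)+R(t)\bigr),
\end{equation*}
still with $\psi(0)=\psi'(0)=0$, and with $w \geq 0 \iff \psi \geq 0$ since the exponential prefactor is positive. Because both $r$ and $R$ start at $\lambda_0 > 0$ with zero initial velocity and positive initial acceleration (indeed $R''(0)=\sigma^2\lambda_0^2>0$), they remain nonnegative on the interval of interest up to the first explosion, so $Q(t) \leq -\tfrac{\beta^2}{4} \leq 0$. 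Writing $h(t)\geq 0$ for the slack in the displayed inequality, we have $\psi'' = h - Q\psi = h + |Q|\psi$.

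Finally, using $\psi(0)=\psi'(0)=0$ together with the identity $\psi(t)=\int_0^t (t-s)\psi''(s)\,ds$, I convert the problem into the Volterra integral equation
\begin{equation*}
\psi(t) = \int_0^t (t-s)\, h(s)\, ds + \int_0^t (t-s)\, |Q(s)|\, \psi(s)\, ds .
\end{equation*}
Both the forcing term and the kernel $(t-s)|Q(s)|$ are nonnegative, so Picard iteration (equivalently, the Neumann series $\sum_{n\geq 0} K^n$ for this Volterra operator $K$, whose spectral radius is zero on any bounded interval) produces a nonnegative solution. Hence $\psi \geq 0$, and therefore $w = r - R \geq 0$; letting the interval extend up to the explosion time of $R$ then yields the bound $\tau_\infty \leq \bar\tau_\infty$ invoked in Propositions~\ref{prop:beta0} and \ref{Prop:5}. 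The main obstacle is exactly the indefinite sign of the zeroth-order coefficient in the naive comparison; the substitution that eliminates the drift is what places the problem in the disconjugate, nonnegative-kernel setting where the comparison becomes elementary.
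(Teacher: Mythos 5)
Your overall architecture coincides with the paper's own proof: form the difference of the inequality \eqref{ODEfull3} and the equality ODE, kill the first-order term with the substitution $e^{-\frac32\beta t}$, integrate twice using the zero initial data \eqref{ic} to get a Volterra identity with kernel $(t-s)\times(\text{nonnegative weight})$, and conclude nonnegativity by iteration (the paper cites the Chu--Metcalf generalized Gronwall inequality exactly where you run the Neumann series; these are the same argument). Your computation of the shifted coefficient, $\psi''+Q\psi\geq 0$ with $Q=-\frac{\beta^{2}}{4}-\sigma^{2}(r+R)$, is correct.

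There is, however, one genuine gap, and it sits at the step your own writeup identifies as decisive. Your sign condition $Q\leq 0$ requires $\sigma^{2}\bigl(r(t)+R(t)\bigr)\geq-\frac{\beta^{2}}{4}$, which you justify by asserting that $r$ and $R$ ``remain nonnegative'' because they start at $\lambda_{0}>0$ with zero velocity and positive acceleration. Positive initial data and initial convexity do not imply global nonnegativity, and for $r$ --- an arbitrary solution of the differential inequality, not a fixed explicit function --- no a priori lower bound is available at this stage; indeed $r\geq R\geq\lambda_{0}$ is essentially the conclusion of the lemma, so the assertion has a circular flavor. (For $\beta<\beta_{C}$ it can be rescued: at any zero of $r'$ the inequality forces $r''\geq\sigma^{2}r^{2}-2\beta^{2}r+2\beta^{2}\lambda_{0}>0$, since this quadratic has negative discriminant when $\beta^{2}<2\sigma^{2}\lambda_{0}$, so $r'$ never turns negative and $r\geq\lambda_{0}$; but the lemma is stated for all $\beta\geq 0$, and for $\beta\geq\beta_{C}$ your claim about $r$ is not obvious.) The paper avoids needing any information about $r$ by a move you skipped: writing $r^{2}-R^{2}=2Rw+w^{2}$ and discarding $\sigma^{2}w^{2}\geq 0$, which holds \emph{unconditionally}, it weakens the inequality to $w''+3\beta w'+2\beta^{2}w\geq 2\sigma^{2}R\,w$. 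The Volterra weight then becomes $\frac{\beta^{2}}{4}+2\sigma^{2}R(t)$, involving only the single explicit solution $R$, whose nonnegativity (in fact monotonicity in the regime where the lemma is applied) follows from the zero-of-$R'$ argument above. With that one-line modification your proof closes and becomes the paper's proof.
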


\begin{proof}
Write $r(t) = R(t) + \delta(t)$ where $R(t)$ is the solution of the ODE. 
Taking differences gives a differential inequality for $\delta(t)$ 
\begin{equation}
\delta''(t) + 3\beta \delta'(t) + 2\beta^2 \delta(t) \geq 2\sigma^2 R(t) \delta(t) +\sigma^2 \delta(t)^2 \geq 2\sigma^2 R(t) \delta(t)\,.
\end{equation}
with initial conditions $\delta(0) = 0\,,\delta'(0)=0$.

Define $\varepsilon(t)$ as
$\delta(t) = e^{-\frac32 \beta t} \varepsilon(t)$, with initial conditions
$\varepsilon(0) = 0 \,, \varepsilon'(0) = 0 $.
This function satisfies the inequality
\begin{equation}\label{ineq1}
\varepsilon''(t) \geq F(t) \varepsilon(t) =
\left( \frac14\beta^2 e^{\frac34\beta t} + 2\sigma^2 R(t) \right) \varepsilon(t),
\end{equation}
where we defined
$F(t) = \frac14\beta^2 e^{\frac34\beta t} + 2\sigma^2 R(t)$.

The inequality (\ref{ineq1}) gives
\begin{equation}
\varepsilon'(t) = \int_0^t \varepsilon''(s) ds 
\geq \int_0^t F(s) \varepsilon(s) ds.
\end{equation}
Integrating again over $t$ we get
\begin{equation}\label{ineq}
\varepsilon(t) = 
\int_0^t \varepsilon'(u) du \geq \int_0^t \left( \int_0^u F(s) 
\varepsilon(s) ds \right) du = 
\int_0^t (t-s) F(s) \varepsilon(s) ds,
\end{equation}
where we exchanged the order of integration in the last step.
Note that $(t-s)F(s)$ is non-negative and continuous function for $0\leq s\leq t$.
Therefore by the generalized Gronwall's inequality in \cite{CM}, we have $\varepsilon(t)\geq 0$.
\end{proof}

\end{document}